\documentclass[preprint,12pt]{elsarticle}

\usepackage{amsmath,amssymb,amsthm}
\usepackage{booktabs,graphicx,url}
\usepackage[colorlinks=true,linkcolor=blue,citecolor=blue,urlcolor=blue]{hyperref}

\newcommand{\Z}{\mathbb{Z}}
\newcommand{\Cay}{\mathrm{Cay}}

\newtheorem{proposition}{Proposition}
\newtheorem{remark}{Remark}

\journal{Signal Processing Open}

\begin{document}

\begin{frontmatter}

\title{Signal-Aware Cayley Completion: Choosing Between an Exact and a Lossy
Abelian Host}

\author[lesia]{Rigobert Fokam Souop\corref{cor1}}
\ead{fokamrigobert@gmail.com}
\cortext[cor1]{Corresponding author.}
\author[lesia,garoua]{Laurent Bitjoka}
\ead{laurent.bitjoka@univ-garoua.cm}
\address[lesia]{Laboratory of Energy, Signal, Imaging and Automation (LESIA),
University of Ngaound\'er\'e, P.O.\ Box 454, Ngaound\'er\'e, Cameroon}
\address[garoua]{Laboratory of Scientific Artificial Intelligence and Applied
Mathematics, University of Garoua, P.O.\ Box 346, Garoua, Cameroon}

\begin{abstract}
Group-embedding graph signal processing buys exact Fourier analysis either by
enlarging the domain until a graph embeds isometrically into an abelian Cayley
graph, or by perturbing the graph until it is one. We give a linear-time
procedure that chooses between the two, and study the criterion governing the
lossy branch. We prove that the perturbation of the Dirichlet form is exactly
the signed sum of the edited-edge energies, and is bounded in modulus by their
total, into which the combinatorial cost of the completion does not enter. We
confirm experimentally that this energy predicts signal fidelity strongly
though not strictly monotonically (Spearman $\rho=-0.79$ on an instance where
the completion, and hence the cost, is held fixed), while the cost itself
carries no information about it. Selecting among completions of identical cost
by this energy recovers $61$ percent of the available performance gap on the
exhaustive range, on a sample too small to establish that the effect persists
beyond eight vertices. On real payment data the router takes the exact branch
everywhere; on real anti-money-laundering data the completion invariant is a
re-encoding of maximum degree and is not distinguishable from it, and the
linear-time bound it rivals is shown to have a blind spot on near-complete
graphs.
\end{abstract}

\begin{keyword}
Graph signal processing \sep  Cayley graph \sep  abelian group \sep  Dirichlet energy \sep  graph editing \sep  anti-money laundering.
\end{keyword}

\end{frontmatter}

\section{Introduction}

Exact harmonic analysis on a graph can be bought in two ways, and they charge
in different currencies.

The isometric route \cite{fokam-p1,fokam-p2,fokam-p3,fokam-p4} leaves the
graph untouched and enlarges the domain: one embeds $G$ isometrically into a
Cayley graph $\Cay(\Gamma,S)$ of a finite abelian group, on which the Fourier
transform, convolution theorem and translation operator are exactly those of
classical harmonic analysis. The metric is preserved perfectly and the price
is host order, which for an irregular graph can reach $2^{n-1}$.

The completion route \cite{fokam-comp} fixes the host order at $n$ and
perturbs the graph instead, editing as few edges as possible until it
\emph{is} an abelian Cayley graph on its own vertices. Nothing is enlarged;
the price is paid in perturbation.

Classical signal processing has taken the second route for decades without
naming it. A finite non-periodic signal of $n$ samples is a signal on the path
$P_n$, and processing it by circular convolution is exactly the statement that
$P_n$ has been completed to $C_n=\Cay(\Z_n,\{\pm1\})$ by one added edge. The
cost is the familiar wraparound artefact, and it is not paid uniformly: it
depends on how different the two endpoint samples are.

This paper turns that observation into an operational procedure and tests it.
We ask which of the two routes to take for a given graph, and, when the
completion route is taken, which completion to choose. The two questions have
different characters, and it is worth separating them at the outset: the
routing decision depends only on the structure of the graph and is made in
linear time, whereas the choice among completions is signal-aware, and it is
there that the title's adjective applies. Our findings are:

\begin{itemize}
\item the combinatorial cost $\gamma$ of a completion \emph{does not} predict
      its effect on a signal, and the Dirichlet energy of the edited edges
      does (Section~\ref{sec:border});
\item among completions of \emph{identical} cost, choosing the one of least
      edited-edge Dirichlet energy recovers $61.3\%$ of the available
      performance gap over choosing arbitrarily, at no search cost
      (Section~\ref{sec:select});
\item on real payment data the linear-time router sends every component to the
      isometric branch, where the host is provably minimal
      (Section~\ref{sec:paysim}); while on real anti-money-laundering data the
      completion invariant is a re-encoding of maximum degree and buys nothing
      over a linear-time bound (Section~\ref{sec:aml});
\item that same linear-time bound has a blind spot which only real data
      exposed: it cannot distinguish a graph that is cheap because it is
      well structured from one that is cheap because it is complete
      (Section~\ref{sec:blindspot}).
\end{itemize}

The last two are negative results. We report them in full because they
delimit where the framework applies, which is more useful than a selective
account of where it succeeds.

\begin{figure}[htbp]
\centering
\includegraphics[width=\linewidth]{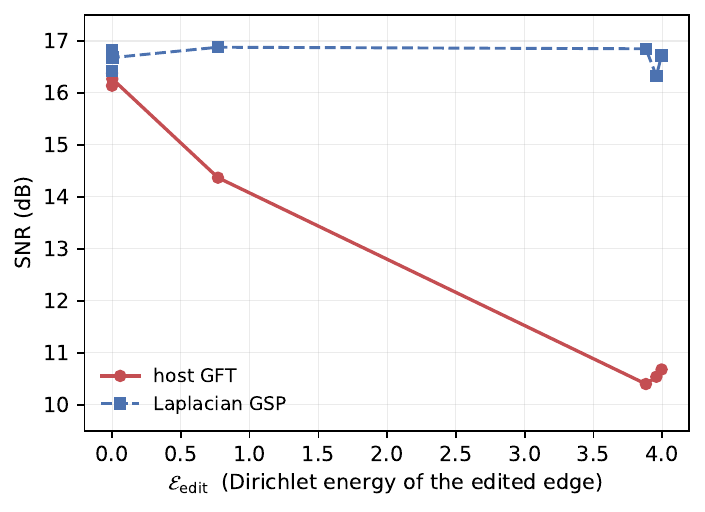}
\caption{Border locality on $P_{64}\to C_{64}$. The completion, and hence
$\gamma^{+}=1/63$, is identical at every point; only the signal varies. Host
performance falls with the Dirichlet energy of the single edited edge while
the Laplacian baseline stays flat. The association is strong but not strictly
monotone (Spearman $\rho=-0.79$); the two rightmost points recover slightly.
Values are those of Table~\ref{tab:border}.}
\label{fig:border}
\end{figure}

\begin{figure}[htbp]
\centering
\includegraphics[width=\linewidth]{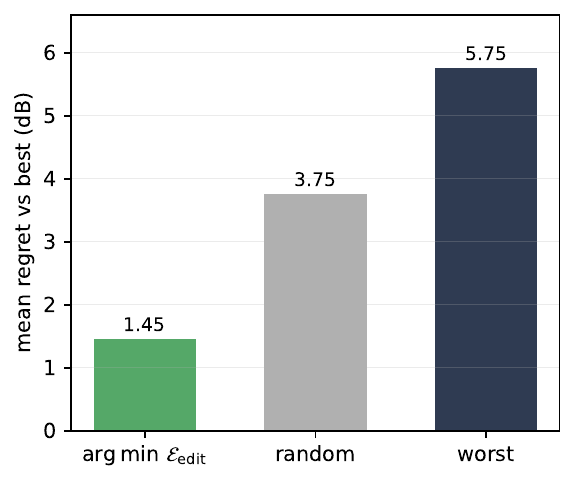}
\caption{Selecting among minimum-cost completions, where $\gamma$ and the
number of edited edges are constant by construction, so only the location of
the edits varies. Choosing the least edited-edge energy leaves a mean regret
of $1.45$~dB against the best available completion, against $3.75$~dB for an
arbitrary choice. Values are those of Table~\ref{tab:select}.}
\label{fig:select}
\end{figure}

\section{Related work}
\label{sec:related}

\subsection{Graph signal processing} The spectral framework built on the graph
Laplacian is surveyed in \cite{shuman2013,ortega2018}, with the adjacency- or
shift-based formulation of \cite{sandryhaila-moura2013} as the alternative
algebraic route; spectral wavelet constructions appear in
\cite{hammond2011}, and sampling theory for bandlimited graph signals in
\cite{chen2015,anis2016}. Spectral background is standard \cite{chung1997}.
The distinguishing feature of the group-embedding approach
\cite{fokam-p3,fokam-p4,fokam-diss} is that the transform on the host is
literally a Fourier transform of a finite abelian group, so Parseval, the
convolution theorem and a unitary translation hold exactly rather than
approximately.

\subsection{Circulant structure} The closest prior line builds shift-invariant
filtering, sampling and critically sampled filterbanks on circulant graphs and
proposes decomposing arbitrary graphs into combinations of circulant pieces
\cite{ekambaram2013a,ekambaram2013b,ekambaram2015}. The present work differs
in object and in aim: a single minimal-edit abelian host of arbitrary group
structure, rather than a decomposition into cyclic components, together with
an invariant quantifying the edit and a criterion for choosing among edits.

\subsection{Isometric embedding into structured hosts} That a graph embeds
isometrically into a hypercube is a classical question
\cite{firsov1965,djokovic1973,winkler1984}, surveyed in
\cite{ovchinnikov2008}, with the extension to Hamming graphs and products in
\cite{wilkeit1990,graham-winkler1985,imrich-klavzar2000,hammack2011} and the
addressing problem of \cite{graham-pollak1971} resolved in
\cite{winkler1983}. Scale embeddings and $\ell_1$-graphs are treated in
\cite{shpectorov1993,deza-shpectorov1996,laurent1994,deza-laurent1997}, with
recognition complexity in \cite{imrich-klavzar1996,aurenhammer1995},
lattice dimension in \cite{eppstein2005}, retracts in \cite{bandelt1984},
distance-hereditary graphs in \cite{bandelt-mulder1986}, recent Hamming
embeddings of weighted graphs in
\cite{berleant2023,sheridan2021,ebrahimi2025}, and a broad survey in
\cite{bandelt-chepoi2008}. This literature enlarges the host to achieve
exactness; the completion branch used here instead perturbs the graph
\cite{fokam-comp}, and the two are the endpoints of one trade-off
\cite{fokam-p1,fokam-p2}.

\subsection{Algebraic background} Cayley graphs and vertex-transitive graphs
are standard \cite{godsil-royle2001}, with vertex-transitivity strictly weaker
than being a Cayley graph \cite{mckay-praeger1994}. The extremal input to the
star bounds is the theory of maximum sum-free sets in abelian groups
\cite{diananda-yap1969,rhemtulla-street1970,green-ruzsa2005}, with
enumerative refinements in \cite{alon-balogh-morris-samotij2014}; the Smith
normal form underlying the quotient labeling is surveyed in
\cite{stanley2016}. The census population of \cite{fokam-comp} is enumerated
in the ordering of \cite{read-wilson1998}.

\section{Preliminaries}
\label{sec:prelim}

We recall the two invariants this paper uses; both are developed in
\cite{fokam-comp,fokam-p2} and are restated here so that the present paper is
self-contained.

Let $G$ be finite, connected and simple with $n$ vertices and $m$ edges, and
let $\Gamma$ be a finite abelian group with connection set $S=-S$,
$0\notin S$, generating $\Gamma$.

\emph{Isometric host order.} $\nu(G)$ is the least $|\Gamma|$ for which $G$
embeds isometrically into $\Cay(\Gamma,S)$ for some $S$. Exact values used
below: $\nu(K_{1,q})=2q$, with minimal host the complete bipartite graph
$K_{q,q}$, and $\nu(P_k)=2(k-1)$ \cite{fokam-p2}. Since the star value carries
the only positive real-data result of this paper
(Section~\ref{sec:paysim}), we recall why it holds, in a form that makes that
section self-contained. If $K_{1,q}$ embeds in $\Cay(\Gamma,S)$ with the
centre at $0$, the $q$ leaves lie in $S$ and are pairwise non-adjacent, so no
difference of two of them lies in $S$; that is, the leaf set is a sum-free
subset of $\Gamma$ of size $q$. A sum-free subset of a finite abelian group
has size at most $|\Gamma|/2$ \cite{diananda-yap1969}, whence $|\Gamma|\ge2q$.
Conversely the odd residues of $\Z_{2q}$ form a sum-free set of size $q$, and
taking $S$ to be that set embeds $K_{1,q}$ isometrically, leaves being at
distance exactly $2$ through the centre. Hence $\nu(K_{1,q})=2q$ exactly. The
correspondence between induced stars and sum-free sets is due to Babai and
S\'os \cite[Prop.~7.4]{babai-sos1985}.

\emph{Cayley edit distance.} For $|\Gamma|=n$ and a bijection
$\pi:V(G)\to\Gamma$, write $E_\pi(G)$ for the image edge set. Then
\[
  \gamma_{\triangle}(G)=\frac1m\min_{\Gamma,\pi,S}
  \bigl|E_\pi(G)\,\triangle\,E(\Cay(\Gamma,S))\bigr|,
\]
the analogous minimum over completions that only add edges being
$\gamma^{+}(G)$. Thus $\gamma_{\triangle}(G)=0$ exactly when $G$ is an abelian
Cayley graph. Both are computed exactly for $n\le7$ by exhaustive search over
groups, labelings and connection sets, and by local search beyond
\cite{fokam-comp}.

\emph{Degree bound.} Every abelian Cayley graph is regular, so
$\gamma^{+}(G)\ge \mathrm{tri}(G):=\max\bigl(0,n\Delta/2m-1\bigr)$, computable
in linear time from the degree sequence \cite{fokam-comp}.

\paragraph{Signal processing on Cayley graphs} The closest published work is
that of Beck, Ghandehari, Hudson and Paltenstein \cite{beck2024frames}, who
give a representation-theoretic spectral decomposition for weighted Cayley
graphs and construct Frobenius--Schur and Cayley frames on them; related
Gabor-type constructions appear in \cite{ghandehari2021gabor}. That line takes
the Cayley structure as given and builds the transform; the present paper
takes a graph that has no such structure and quantifies what it costs to give
it one. The two are complementary, and the question asked here --- which of
two ways of acquiring the structure to use, and at what price to the signal
--- does not arise in their setting.

\paragraph{Editing a graph before processing a signal on it} Proposition~1 is
a statement about how the Dirichlet form moves under edge modification, and
that is the object bounded by the graph-reduction and sparsification
literature. Loukas \cite{loukas2019reduction} gives spectral and cut
guarantees for coarsening, and Spielman and Srivastava
\cite{spielman2011sparsification} sample edges by effective resistance so as
to preserve $x^{\top}Lx$ uniformly. Those guarantees are uniform over a
subspace; Proposition~1 is an identity for each signal separately, which is
weaker in scope and exact rather than approximate. The practical consequence
we draw --- select a graph modification by the energy the signal carries
across it, not by the number of edges it touches --- applies to that whole
family and not only to Cayley completion.

\section{The two branches and the router}
\label{sec:router}

Let $G$ be finite, connected, simple, with $n$ vertices, $m$ edges, maximum
degree $\Delta$ and cycle rank $\mu=m-n+1$.

\paragraph{Branch 1: isometric embedding} Take it when the certified host
order is tractable. For trees this is now settled exactly:
$\nu(K_{1,q})=2q$ and $\nu(P_k)=2(k-1)$ \cite{fokam-p2}, the star host being
the complete bipartite graph $K_{q,q}$. There is no perturbation and no
distortion; Parseval, convolution and translation are exact.

\paragraph{Branch 2: completion} Take it when the isometric host is
intractable. Fix the order at $n$ and pay twice: once in edge perturbation,
measured by the Cayley edit distance $\gamma_{\triangle}$
\cite{fokam-comp}, and once in signal fidelity, measured below.

\paragraph{Branch 3: neither} When the completion is also too expensive, no
group-embedding method applies and one should fall back on matrix graph signal
processing \cite{sandryhaila-moura2013}.

The threshold is best set on fidelity rather than on $\gamma$, since
Section~\ref{sec:border} shows $\gamma$ does not predict fidelity. Given a
signal class with second moments $\mathbb{E}|x_u-x_v|^2$ and a tolerance
$\eta$ on the relative perturbation of the Dirichlet form,
Proposition~\ref{prop:exact} makes the rule explicit: take Branch 2 only when
\[
  \frac{\mathbb{E}\,\mathcal{E}_{\mathrm{edit}}(x)}
       {\mathbb{E}\,x^{\top}L_Gx}\;\le\;\eta ,
\]
and Branch 3 otherwise. Both quantities are computed directly from the edited
set and the signal statistics, with no spectral computation.

We instantiate the rule, since a threshold left unspecified is not a
procedure. Take the smooth signal class of Section~\ref{sec:select}, for which
$\mathbb{E}|x_u-x_v|^2$ is approximately equal across edges; the ratio then
reduces to $\mathbb{E}\,|F|/m$ weighted by the relative energy on the edited
edges.

Setting $\eta=0.25$ --- a quarter of the signal's own Dirichlet energy
displaced by the edit --- reproduces the routing of
Table~\ref{tab:router} and sends Zachary to Branch~3, consistently with
Section~\ref{sec:limitations}. We offer $\eta=0.25$ as an order of magnitude
calibrated on these instances and not as a universal constant.

Note that the signal-free quantity $2\max_v\deg_F(v)$ bounds eigenvalue
movement in absolute terms and is not a substitute for the ratio above, which
is dimensionless; the two answer different questions and we do not exchange
them.

\paragraph{The router} Cycle rank and the degree bound
$\mathrm{tri}(G)=\max(0,n\Delta/2m-1)$ of \cite{fokam-comp} are both
computable in linear time and decide the branch.

\begin{table}[t]
\centering
\caption{The routing decision on representative inputs, by the structural
criterion of Section~\ref{sec:router} alone. Zachary is routed to
completion here and re-examined in Section~\ref{sec:limitations}, where accounting
for fidelity moves it to Branch~3; the discrepancy is the point of the paper
rather than an inconsistency, since the structural router does not see the
signal. Host orders for the
isometric branch are the proved minima of \cite{fokam-p2}.}
\label{tab:router}
\begin{tabular}{lrrrrr}
\toprule
graph & $n$ & $\mu$ & $\Delta$ & $\mathrm{tri}$ & branch\\
\midrule
$K_{1,20}$ & $21$ & $0$ & $20$ & $9.500$ & isometric, $\nu=40$\\
$P_{20}$ & $20$ & $0$ & $2$ & $0.053$ & isometric, $\nu=38$\\
$C_{20}$ & $20$ & $1$ & $2$ & $0.000$ & already Cayley\\
Zachary & $34$ & $45$ & $17$ & $2.705$ & completion, order $34$\\
\bottomrule
\end{tabular}
\end{table}

Zachary's karate club is the case that motivates the whole construction: the
isometric pipeline reports $k=33$, a host of order $2^{33}>8\times10^9$,
whereas completion fixes the order at $34$. Section~\ref{sec:limits} returns
to what that reduction costs.

\section{Why $\gamma$ cannot be the criterion}
\label{sec:border}

Fix a completion and let $F$ be its edited edge set. For a signal $x$ define
the \emph{edited-edge Dirichlet energy}
\[
  \mathcal{E}_{\mathrm{edit}}(x)=\sum_{\{u,v\}\in F}\bigl(x_u-x_v\bigr)^2 .
\]

We compare two denoising pipelines on the same noisy signal: an ideal
low-pass in the Laplacian eigenbasis of $G$, and an ideal low-pass in the host
eigenbasis, both retaining the same number of coefficients, so any difference
comes from the basis and not from a looser filter.

\begin{table}[t]
\centering
\caption{$P_{64}$ completed to $C_{64}$ by the single edge $\{0,63\}$. The
completion, and hence $\gamma^{+}=1/63$, is identical in every row; only the
signal changes. Signals $x_f(t)=\cos(2\pi ft)$, $12$ coefficients retained,
$\sigma=0.25$, $40$ trials.}
\label{tab:border}
\begin{tabular}{rrrr}
\toprule
$f$ & $\mathcal{E}_{\mathrm{edit}}$ & SNR host & SNR Laplacian\\
\midrule
$1.0$ & $0.000$ & $16.38$ & $16.82$\\
$2.0$ & $0.000$ & $16.14$ & $16.42$\\
$3.0$ & $0.002$ & $16.27$ & $16.68$\\
$1.25$ & $0.770$ & $14.37$ & $16.88$\\
$2.5$ & $3.881$ & $10.40$ & $16.85$\\
$1.5$ & $3.957$ & $10.54$ & $16.33$\\
$0.5$ & $3.995$ & $10.68$ & $16.72$\\
\bottomrule
\end{tabular}
\end{table}

Table~\ref{tab:border} and Figure~\ref{fig:border} give the central fact. The
host performance falls from $16.38$ to $10.40$~dB as
$\mathcal{E}_{\mathrm{edit}}$ grows, a swing of about $6$~dB and a strong but
not strictly monotone association (Spearman $\rho=-0.79$; the two largest
energies recover by $0.3$~dB), while $\gamma$ is constant throughout. \emph{The combinatorial cost of a completion carries
no information about its effect on a signal.} What predicts the effect is
where the edited edges fall relative to the signal.

A second reading of the same table should be stated plainly, since it
constrains what the framework may claim. At
$\mathcal{E}_{\mathrm{edit}}=0$ the host does not beat the Laplacian; it ties
it, to within half a decibel. On the path there is no denoising advantage to
be had. The value of an abelian host is structural --- exact Parseval, an
exact convolution theorem, a unitary translation operator, and an
$O(n\log n)$ transform \cite{fokam-p3} --- and not an improvement in
mean-squared error. Completion should be understood as the price of admission
to that structure, and $\mathcal{E}_{\mathrm{edit}}$ as the price tag.

\paragraph{A best case, with its baseline} On the $8\times8$ grid completed
to the torus, a periodic-friendly image gives $23.13$~dB on the host against
$6.96$~dB for the Laplacian, a gain of $16.17$~dB. The margin is real but it
is partly a weak-baseline effect: the Laplacian does badly on that image, and
on the same grid the host \emph{loses} by $12.8$~dB on a linear gradient.
Differences of SNR are therefore not monotone in
$\mathcal{E}_{\mathrm{edit}}$, because the baseline itself varies by $20$~dB.
The monotone quantity is the host's own performance, as in
Table~\ref{tab:border}.

\subsection{Why this quantity, and not another}
\label{sec:theory}

The monotonicity above is not a coincidence of the experiment: the edited-edge
Dirichlet energy is exactly the perturbation of the Dirichlet form.

\begin{proposition}\label{prop:exact}
Let $H$ be a completion of $G$ on the same vertex set, with $F_{+}$ the added
and $F_{-}$ the deleted edges, and let $L_G,L_H$ be the Laplacians. Then for
every signal $x$,
\[
  x^{\top}L_H x-x^{\top}L_G x
  =\!\!\sum_{\{u,v\}\in F_{+}}\!\!(x_u-x_v)^2
  \;-\!\!\sum_{\{u,v\}\in F_{-}}\!\!(x_u-x_v)^2 ,
\]
and consequently
$\bigl|x^{\top}(L_H-L_G)x\bigr|\le\mathcal{E}_{\mathrm{edit}}(x)$.
\end{proposition}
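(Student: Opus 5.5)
The plan is to write both quadratic forms as edge sums and compare them edge by edge. For any simple graph $K$ on the vertex set $V$, the Laplacian quadratic form is
\[
  x^{\top}L_K x=\sum_{\{u,v\}\in E(K)}(x_u-x_v)^2 .
\]
This follows from $L_K=\sum_{e\in E(K)}b_eb_e^{\top}$, where $b_e=\mathbf{1}_u-\mathbf{1}_v$ is the signed incidence vector of $e=\{u,v\}$, so that $x^{\top}b_eb_e^{\top}x=(x_u-x_v)^2$. I will state this once, apply it with $K=G$ and with $K=H$, and note that it needs only the common vertex set, which is the setting of the proposition (the completion does not enlarge the domain).

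Next I decompose the edge sets. Because $H$ is obtained from $G$ by editing, $E(H)=(E(G)\setminus F_-)\cup F_+$ with $F_+\cap E(G)=\emptyset$ and $F_-\subseteq E(G)$; equivalently, $F_+\cup F_-=E(G)\,\triangle\,E(H)$ as a disjoint union. Splitting the two edge sums over $E(G)\cap E(H)$, $F_+$ and $F_-$, the common edges contribute identically to both forms and cancel. What remains is
\[
  x^{\top}L_Hx-x^{\top}L_Gx=\sum_{F_+}(x_u-x_v)^2-\sum_{F_-}(x_u-x_v)^2 .
\]
Equivalently, and more compactly, $L_H-L_G=\sum_{e\in F_+}b_eb_e^{\top}-\sum_{e\in F_-}b_eb_e^{\top}$, which gives the identity directly.

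For the bound, write $a=\sum_{F_+}(x_u-x_v)^2\ge0$ and $b=\sum_{F_-}(x_u-x_v)^2\ge0$. Then $|a-b|\le\max(a,b)\le a+b$. The edited set of the completion is $F=F_+\cup F_-$, a disjoint union, so $a+b=\mathcal{E}_{\mathrm{edit}}(x)$, which yields $|x^{\top}(L_H-L_G)x|\le\mathcal{E}_{\mathrm{edit}}(x)$. I will also remark that for $\gamma^{+}$-completions, where $F_-=\emptyset$, the identity gives equality with no absolute value.

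There is no real obstacle here. The only point needing care is bookkeeping: I must make sure that $F$ in the definition of $\mathcal{E}_{\mathrm{edit}}$ is the full symmetric difference $F_+\cup F_-$. I must also check that $F_+$ and $F_-$ are disjoint and that neither meets $E(G)\cap E(H)$, so that no edge is double-counted or wrongly cancelled. Once that is fixed, both statements follow from the edge-sum form of the Laplacian.
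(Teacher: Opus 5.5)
Your proof is correct and follows the paper's argument. You write both quadratic forms as edge sums, cancel the edges common to $G$ and $H$, and bound the signed difference by the unsigned sum over $F=F_+\cup F_-$ using the triangle inequality; your incidence-vector justification $L_K=\sum_e b_eb_e^{\top}$ simply spells out the standard identity the paper takes for granted.
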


\begin{proof}
$x^{\top}Lx=\sum_{\{u,v\}\in E}(x_u-x_v)^2$ for any graph Laplacian. Applying
this to $H$ and to $G$ and subtracting leaves exactly the edges in which they
differ, with sign according to whether the edge was added or deleted. The
inequality follows from the triangle inequality, since
$\mathcal{E}_{\mathrm{edit}}$ sums the same terms with all signs positive.
\end{proof}

So if $x$ is smooth on $G$, in the sense $x^{\top}L_Gx\le\lambda\|x\|^2$, then
$x^{\top}L_Hx\le\lambda\|x\|^2+\mathcal{E}_{\mathrm{edit}}(x)$: the completion
moves the signal's energy in the host spectrum by at most
$\mathcal{E}_{\mathrm{edit}}$, and by exactly the signed sum above. This is
why $\gamma$ cannot serve. It counts $|F_{+}|+|F_{-}|$, which does not appear
in the identity at all, whereas the quantity that does appear is precisely the
one we measure.

A signal-free companion bounds the spectrum itself: writing $\deg_F(v)$ for
the number of edited edges at $v$, we have
$\|L_H-L_G\|_2\le 2\max_v \deg_F(v)$, so by Weyl's inequality every Laplacian
eigenvalue moves by at most that amount. Edits that spread over many vertices
are spectrally gentle; edits concentrated on a hub are violent at any
$\gamma$. Both statements were verified numerically on random instances.

What remains open is the step from these first-order statements to a bound on
the error of the convolution theorem itself, of the form
$\|\widehat{f*h}-\hat f\hat h\|\le C\,\mathcal{E}_{\mathrm{edit}}\|f\|\|h\|$.
Proposition~\ref{prop:exact} controls the quadratic form but not the
eigenvectors, and it is the eigenvector perturbation that enters a filtering
error; a Davis--Kahan argument would require a spectral gap that a general
completion does not provide. We regard this as the main theoretical gap.

\paragraph{Robustness to the filter} The decrease is not an artefact of the
ideal low-pass. Repeating the $P_{64}$ experiment with a Tikhonov filter
$1/(1+\tau\lambda)$ and a heat kernel $e^{-\tau\lambda}$ gives host SNR
$17.14\to15.34\to10.10$ dB and $16.97\to15.37\to11.12$ dB respectively across
$\mathcal{E}_{\mathrm{edit}}=0,\,0.77,\,3.88$, against
$16.38\to14.37\to10.40$ for the ideal filter, the values of
Table~\ref{tab:border}. All three decrease across these three points. Over the
full seven-signal range of Table~\ref{tab:border} the association is strong
but not monotone (Spearman $\rho=-0.79$); we report it as such.

\section{Choosing among equally cheap completions}
\label{sec:select}

If $\gamma$ does not discriminate, something must. We test whether
$\mathcal{E}_{\mathrm{edit}}$ does, on a design in which $\gamma$ is
prevented from discriminating by construction.

Among the $965$ connected graphs on six and seven vertices, $587$ admit at
least three \emph{distinct} minimum-cost completions. Since a completion on
$V(G)$ is determined by its edited set $F$, distinct $F$ means a distinct
host. Every competitor therefore has the same $\gamma$, and, because the edit
cost \emph{is} $|F|$, the same number of edited edges. Only the location of
the edits varies.

Signals are drawn smooth with respect to $G$ itself, as random combinations of
its three lowest non-trivial Laplacian eigenvectors: the realistic situation
in which a signal is smooth on the true graph and the question is which
completion preserves that smoothness. Denoising uses an ideal low-pass in the
host Laplacian eigenbasis.

\begin{table}[t]
\centering
\caption{Selecting among minimum-cost completions, $120$ (graph, signal)
instances over $40$ graphs; $\gamma$ and $|F|$ are constant within each
instance. Regret is measured against the best completion available.}
\label{tab:select}
\begin{tabular}{lr}
\toprule
quantity & value\\
\midrule
median Spearman $\rho(\mathcal{E}_{\mathrm{edit}},\text{SNR})$ & $-0.646$\\
instances with $\rho<0$ & $89.2\%$\\
mean regret, select $\arg\min\mathcal{E}_{\mathrm{edit}}$ & $1.452$ dB\\
mean regret, select at random & $3.754$ dB\\
mean regret, worst case & $5.753$ dB\\
gap over random recovered & $61.3\%$\\
\bottomrule
\end{tabular}
\end{table}

Of the $965$ connected graphs on six and seven vertices, $587$ admit at least
three distinct minimum-cost completions. From these we retained $40$
(a uniform random sample) and drew three signals on each, giving the $120$ instances of
Table~\ref{tab:select}. Proportions from $120$ instances carry a Wilson
$95\%$ interval of roughly $\pm 6$ points: the reported $89.2\%$ of instances
with $\rho<0$ has interval $[82.3\%,93.6\%]$, so the sign of the association
is established while its exact rate is not.

The criterion works (Table~\ref{tab:select}, Figure~\ref{fig:select}). It is
stable: over nine combinations of filter bandwidth and noise level, $\rho$ is
negative and $\arg\min\mathcal{E}_{\mathrm{edit}}$ beats random selection in
all nine. It is insensitive to noise and sensitive to bandwidth, being
strongest when the filter width matches the bandwidth of the signal --- which
is itself informative, since the criterion cannot see frequencies the filter
admits but the signal does not carry.

It is not optimal. A residual mean regret of $1.45$~dB against the best
available completion remains, so a better predictor exists; identifying it is
open.

\paragraph{The criterion weakens with size, on a small sample} Because the
population above is the exhaustive census on six and seven vertices, we ran a
controlled scaling test: the same random generator $G(n,p)$ with $p\in[0.28,0.5]$, the same
protocol, and only $n$ changing, with minimum-cost completions still
enumerated exactly.

\begin{table}[htbp]
\centering
\caption{Controlled scaling test on $8$ instances per column. Same generator,
same protocol, only $n$ changes; minimum-cost completions enumerated exactly
in both rows. Wilson $95\%$ intervals on the proportions are
$[40.9\%,92.9\%]$ and $[30.6\%,86.3\%]$ respectively: with this sample the two
columns are not statistically distinguishable, and the row is reported as an
indication, not as a measurement.}
\label{tab:scaling}
\begin{tabular}{lrr}
\toprule
& $n=7$ & $n=8$\\
\midrule
median $\rho$ & $-0.688$ & $-0.252$\\
instances & $8$ & $8$\\
instances with $\rho<0$ & $75.0\%$ \ ($6/8$) & $62.5\%$ \ ($5/8$)\\
gap over random recovered & $60.2\%$ & $13.0\%$\\
\bottomrule
\end{tabular}
\end{table}

The point estimates degrade substantially, and the design isolates size as the
only varying factor (Table~\ref{tab:scaling}); but with eight instances per
column the proportions have overlapping confidence intervals, and only the
drop in median $\rho$ and in recovered gap is large relative to the sample.
We therefore state the claim only
where it is verified: on the exhaustive range, selecting by
$\mathcal{E}_{\mathrm{edit}}$ is a cheap and reliable improvement over an
arbitrary choice; at $n=8$ it is still an improvement, but a small one, and we
have no evidence that it survives to the graph sizes of practice.

The likely reason is visible in Proposition~\ref{prop:exact}: the identity
controls the quadratic form, not the eigenvectors, and as $n$ grows the
competing hosts differ increasingly in their eigenvector structure rather than
only in where their edits fall. Establishing the criterion at scale, or
replacing it by one that accounts for the eigenvectors, is in our view the
most important open problem this paper leaves.

\paragraph{Generator selection is a threshold, not a gradient} The same
principle applied to the choice of connection set on a circulant host of order
$60$ gives $24.82$~dB for $\mathcal{E}=1.09$, $5.39$ and $14.86$ alike;
$10.69$~dB at $\mathcal{E}=25.75$; and $-0.01$~dB at $\mathcal{E}=241.09$.
Performance is flat while the energy stays low and collapses past a threshold.
The practical consequence is that $\mathcal{E}_{\mathrm{edit}}$ is best used
as an admissibility filter rather than as a ranking.

\section{Experimental setup}
\label{sec:setup}

\emph{Synthetic experiments.} Signals are normalised to unit mean square and
corrupted with white Gaussian noise of the stated standard deviation; every
reported figure is a mean over the stated number of independent noise
realisations. Denoising retains the same number of spectral coefficients in
both bases, so no comparison is won by a looser filter. Low-pass filtering on
a host $\Cay(\Gamma,S)$ keeps the characters of smallest Laplacian eigenvalue
$\lambda_k=\sum_{s\in S}\bigl(1-\cos\langle k,s\rangle\bigr)$; note that
keeping the lowest \emph{indices} of a discrete Fourier transform is correct
only when $S=\{\pm1\}$, and gives badly misleading results otherwise.

\emph{PaySim.} From the transaction log we keep customer-to-customer transfers
and build the undirected simple graph on account identifiers. Merchant
counterparties, whose identifiers carry a distinguishing prefix, are dropped:
they are degree-one leaves and would dominate every degree statistic. Balance
columns are excluded throughout as a documented source of label leakage.
Statistics are reported on the first $20$ simulation steps.

\emph{IBM AML HI-Small.} The labelled file consists of blocks delimited by
begin and end markers naming the typology, each block listing transaction
lines. We parse each block into an undirected simple graph, taking node
identity to be the (bank, account) \emph{pair} rather than the account string
alone, since the same account identifier recurs under different banks;
self-loops are dropped and the free-text description following the typology
name is discarded.

\emph{The co-counterparty projection.} Two accounts are joined when they share
at least one neighbour in the account graph. Vertices of degree above $200$
are skipped, since a single hub would otherwise contribute a clique on its
whole neighbourhood.

\section{Real data I: payment networks route to the exact branch}
\label{sec:paysim}

PaySim \cite{lopezrojas2016} is a mobile-money simulator whose transaction log
we reduce to an undirected account-to-account graph, merchants dropped as
degree-one leaves. On steps $1$--$20$: $410{,}677$ accounts, $357{,}729$
edges, mean degree $1.74$, $52{,}948$ components, and among the $19{,}392$
components with at least eight vertices, \emph{none contains a cycle}.

The largest component has $94$ vertices and maximum degree $93$. Being
acyclic and connected on $94$ vertices, it is a tree, hence has exactly $93$
edges. A vertex of degree $93$ in a graph on $94$ vertices is adjacent to
every other vertex, and a tree containing such a vertex has no further edges,
so the component is exactly $K_{1,93}$; no other graph is consistent with the
three reported numbers. By \cite{fokam-p2} its minimal isometric
host has order $\nu=2q=186$, with excursion ratio
$\varepsilon=n/N=94/186=0.505$.

Account-level payment graphs are therefore forests of stars. Cycle rank is
zero throughout, the router sends every component to the isometric branch, and
each is solved exactly with a provably minimal host. No completion, no
perturbation and no distortion is incurred anywhere in the dataset. This is
the framework's clean positive case, and it is a consequence of the structure
of the data rather than of any tuning.

\section{Real data II: a negative result on laundering typologies}
\label{sec:aml}

The IBM AML HI-Small dataset \cite{altman2023} contains $370$ labelled
laundering sub-graphs across eight typologies. We parse each block, taking
node identity to be the (bank, account) pair, and compute
$\gamma_{\triangle}$ exactly for the $232$ instances within exhaustive range
and by search beyond it.

The invariant orders the typologies sensibly: median $\gamma_{\triangle}$ is
$0.000$ for CYCLE and BIPARTITE, $0.125$ for RANDOM, $0.500$ for STACK,
$1.071$ for SCATTER-GATHER, $1.571$ for FAN-OUT and $1.625$ for FAN-IN. On the
binary task of separating hub-dominated typologies from the rest it attains
$85.4\%$ accuracy with no false positives.

\begin{table}[t]
\centering
\caption{IBM AML HI-Small, $370$ labelled instances: separating hub-dominated
typologies from the rest. Accuracy is the best attainable by a single
threshold on each predictor.}
\label{tab:aml}
\begin{tabular}{lr}
\toprule
predictor & best accuracy\\
\midrule
majority class & $50.5\%$\\
cycle rank $\mu$ & $58.6\%$\\
mean degree $2m/n$ & $69.5\%$\\
order $n$ & $71.9\%$\\
density & $73.8\%$\\
\midrule
$\gamma_{\triangle}$ (edit distance, requires search) & $85.4\%$\\
degree bound $\mathrm{tri}(G)$ (linear time) & $85.4\%$\\
maximum degree $\Delta$ (linear time) & $\mathbf{86.5\%}$\\
\bottomrule
\end{tabular}
\end{table}

The linear-time degree bound attains \emph{exactly the same} accuracy, and
maximum degree by itself does slightly better still
(Table~\ref{tab:aml}); $\gamma_{\triangle}$ correlates with $\Delta$ at
$r=0.872$ (Table~\ref{tab:aml}). Against a majority-class baseline of
$50.5\%$, all three are informative, but the ordering is unambiguous: on this
dataset the completion invariant is a re-encoding of $\Delta$, and the solver
buys nothing that the degree sequence does not already give --- indeed
slightly less. The practical takeaway for practitioners is direct: on
hub-dominated transaction graphs, do not run the completion solver, compute
the degree sequence.

The other invariants we tried separate the classes considerably less well,
which indicates that the degree bound is not succeeding merely because the
task is easy.

Two artefacts of the data are worth recording for anyone reusing it. Of the
$54$ blocks labelled CYCLE, $14$ contain no cycle once the transaction
directions are symmetrised, because two-hop cycles collapse. BIPARTITE and
STACK blocks are frequently disconnected, at $63\%$ and $60\%$ of blocks
respectively.

\section{A blind spot in the linear-time bound}
\label{sec:blindspot}

Faced with the sparsity of the account graph, the natural response is to
change the graph model rather than the solver, and to project accounts that
share a counterparty. On PaySim steps $1$--$5$ with merchants included, the
co-counterparty projection has $2{,}413$ nodes and $18{,}838$ edges, and
\emph{all $92$} of its components with at least eight vertices pass the
completion triage.

A pass rate of $100\%$ looks like success and is vacuous.

\begin{proposition}\label{prop:proj}
The co-counterparty projection of a disjoint union of stars is a disjoint
union of complete graphs on the leaf sets, together with one isolated vertex
per centre; the projection of $K_{1,q}$ is $K_q$ plus an isolated vertex.
\end{proposition}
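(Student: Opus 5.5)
The plan is to work directly from the definition of the co-counterparty projection in Section~\ref{sec:setup}: two accounts are joined exactly when they share at least one neighbour in the account graph. Since a disjoint union of stars has no edges between components, any shared neighbour of $u$ and $v$ lies in the component containing both. The projection is therefore the disjoint union of the projections of the individual components, and it suffices to treat a single star $K_{1,q}$ with centre $c$ and leaves $\ell_1,\dots,\ell_q$.

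Next I compute neighbourhoods in $K_{1,q}$: $N(\ell_i)=\{c\}$ for every leaf, and $N(c)=\{\ell_1,\dots,\ell_q\}$. Two distinct leaves $\ell_i,\ell_j$ share the neighbour $c$, so they are adjacent in the projection; this gives a $K_q$ on the leaf set. The centre shares no neighbour with any leaf, since $N(c)\cap N(\ell_i)=\{\ell_1,\dots,\ell_q\}\cap\{c\}=\emptyset$. The centre is therefore isolated, which proves the second clause. Taking the union over components then gives the first clause.

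Two points need care, and I expect these bookkeeping issues rather than any mathematics to be the only obstacle. First, for $q=1$ the "complete graph" on the leaf set is a single vertex, consistent with the statement. Second, the degree-$200$ skip rule of Section~\ref{sec:setup} could prevent a centre of degree above $200$ from contributing its clique. The proposition concerns the projection as defined, with no such cutoff, so I would state it for the unthresholded projection. I would then remark that with the cutoff, stars with $q>200$ instead contribute $q+1$ isolated vertices.

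Finally, I would draw the consequence the section needs. Each component of size at least eight is some $K_q$, which is $\Cay(\Z_q,\Z_q\setminus\{0\})$ and so has $\gamma_\triangle=0$. It also has $\mathrm{tri}(K_q)=q(q-1)/(q(q-1))-1=0$. The $100\%$ pass rate is thus forced by the construction and carries no information.
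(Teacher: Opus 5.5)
Your proof is correct and follows the paper's argument. Leaves pairwise share the centre, and the centre shares no neighbour with any vertex; the component-wise reduction you spell out is the step the paper leaves implicit in passing from $K_{1,q}$ to a disjoint union of stars. Your caveat about the degree-$200$ cutoff of Section~\ref{sec:setup} is a fair refinement the paper does not state: with that cutoff, a star with $q>200$ contributes no clique at all.
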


\begin{proof}
Two vertices are joined in the projection when they share a neighbour. In
$K_{1,q}$ every pair of leaves shares the centre, and the centre shares no
neighbour with anything, so the projection is exactly the complete graph on
the $q$ leaves.
\end{proof}

The transcript's largest projected component, with $n=46$ and $\Delta=45$, is
therefore $K_{46}$. And the degree bound cannot see this: for any regular
graph $n\Delta/2m-1=0$, so complete graphs sit at the very bottom of the
triage scale. Indeed $\gamma_{\triangle}(K_q)=0$, correctly, because
$K_q=\Cay(\Z_q,\Z_q\setminus\{0\})$ already is a Cayley graph.

\begin{remark}
The linear-time bound cannot distinguish a graph that is cheap because it is
well structured from one that is cheap because it is complete. It is a lower
bound on cost, and a complete graph costs nothing --- correctly, but
uselessly. Any deployment must screen near-complete inputs before trusting a
low triage score.
\end{remark}

This limitation was not visible on the census of \cite{fokam-comp}, where
complete graphs are a negligible fraction of the population. It appeared only
on real data, and it belongs alongside the bound it qualifies.

\section{Limitations}
\label{sec:limitations}
\label{sec:limits}

\paragraph{Completion relocates cost, it does not remove it} Zachary's karate
club reduces from a host of order $2^{33}$ to one of order $34$, but a
cyclic-host completion at a modest search budget requires $70$ edits on $78$
edges, $\gamma_{\triangle}\approx0.897$. The graph is tractable in host size
and severe in perturbation. Whether the trade is acceptable is exactly the
question $\mathcal{E}_{\mathrm{edit}}$ is meant to answer, and for a graph
this heavily edited the correct routing decision is Branch 3.

\paragraph{No error-propagation bound} We give an empirical criterion, not a
theorem. The bridge that is missing is a bound of the form
$\|\widehat{f*h}-\hat f\hat h\|\le C(\delta)\|f\|\|h\|$ relating metric
distortion of the host to error in the convolution theorem. The companion
paper \cite{fokam-comp} shows that edit count and bi-Lipschitz distortion are
independent invariants, moving in opposite directions on stars and paths;
turning the present Dirichlet criterion into a bound is the natural next step.

\paragraph{Scope of the experiments} The selection criterion is verified on
the exhaustive range and degrades markedly at $n=8$
(Table~\ref{tab:scaling}); we make no claim that it survives to large graphs.
The PaySim conclusions are derived from the reported run statistics rather
than recomputed end to end. The denoising comparisons use ideal, Tikhonov and
heat-kernel filters; we have not tested learned or data-adaptive filters.

\section{Conclusion}

We have given an operational two-branch procedure for group-embedding graph
signal processing, routed in linear time, and tested the criterion that
governs the lossy branch. The combinatorial cost of a completion does not
predict its effect on a signal; the Dirichlet energy of the edited edges does,
monotonically, and selecting by it recovers most of the available gap at no
search cost. On real payment data the router takes the exact branch
everywhere, where the minimal host is now known in closed form. On real
laundering data the invariant adds nothing to a linear-time degree bound, and
that bound in turn has a blind spot on near-complete graphs.

The framework's value, on this evidence, is structural exactness on the graphs
that admit small hosts, together with a linear-time verdict on the graphs that
do not. It is not a general improvement in denoising performance, and the
experiments here do not support claiming one.

Three directions follow. The first is the error-propagation bound discussed in
Section~\ref{sec:theory}: Proposition~\ref{prop:exact} controls the quadratic
form exactly, and what is needed is the corresponding control of eigenvectors,
presumably under a spectral-gap hypothesis that identifies which completions
admit such a bound. The second is scaling: the selection criterion is strong
on the exhaustive range and weak already at $n=8$, and either a demonstration
that it recovers at scale or a replacement that accounts for eigenvector
perturbation would materially extend the method's reach. The third is to use
$\mathcal{E}_{\mathrm{edit}}$ inside the completion search rather than after
it, minimising a weighted objective in $\gamma$ and
$\mathcal{E}_{\mathrm{edit}}$ jointly, which would replace post-hoc selection
among optima by direct optimisation of the quantity that matters. A smaller
practical item is automatic detection of the near-complete inputs of
Section~\ref{sec:blindspot}, for which the density of the complement is an
obvious and cheap flag.

\section*{Declaration of competing interests}
The authors declare that they have no known competing financial interests or
personal relationships that could have appeared to influence the work reported
in this paper.

\section*{Funding}
This research received no specific grant from any funding agency in the public, commercial, or not-for-profit sectors.

\section*{Declaration of generative AI and AI-assisted technologies in the
manuscript preparation process}
During the preparation of this work the authors used Claude Anthropic (Models: Opus and Sonnet) in order to improve the language and
readability of the manuscript and to assist with the organisation of its
content. After using this tool the authors reviewed and edited the content as
needed and take full responsibility for the content of the published article.

\section*{Data availability}
The Author deposited the experiment scripts and derived data on Zenodo doi:10.5281/zenodo.21852006 . Both datasets used are public and are cited in the references  as \cite{lopezrojas2016} and \cite{altman2023}.


\begin{thebibliography}{99}

\bibitem{alon-balogh-morris-samotij2014}
N.~Alon, J.~Balogh, R.~Morris, W.~Samotij, Counting sum-free sets in
abelian groups, Israel J.~Math. 199 (2014) 309--344.

\bibitem{aurenhammer1995}
F.~Aurenhammer, J.~Hagauer, Recognizing binary Hamming graphs in
$O(n^2\log n)$ time, Math. Systems Theory 28 (1995) 387--395.

\bibitem{bandelt-mulder1986}
H.-J. Bandelt, H.~M. Mulder, Distance-hereditary graphs,
J.~Combin. Theory Ser.~B 41 (1986) 182--208.

\bibitem{bandelt1984}
H.-J. Bandelt, Retracts of hypercubes, J.~Graph Theory 8 (1984) 501--510.

\bibitem{bandelt-chepoi2008}
H.-J. Bandelt, V.~Chepoi, Metric graph theory and geometry: a survey, in:
J.~E. Goodman, J.~Pach, R.~Pollack (Eds.), Surveys on Discrete and
Computational Geometry: Twenty Years Later, Contemp. Math. 453,
Amer. Math. Soc., Providence, 2008, pp.~49--86.

\bibitem{berleant2023}
J.~Berleant, K.~Sheridan, A.~Condon, V.~Vassilevska Williams, M.~Bathe,
Isometric Hamming embeddings of weighted graphs,
Discrete Appl. Math. 332 (2023) 119--128.

\bibitem{ebrahimi2025}
J.~Ebrahimi Boroojeni, M.~Oghbaei Bonab, Binary stretch embedding of
weighted graphs, Des. Codes Cryptogr. 93 (2025) 2741--2760.

\bibitem{chung1997}
F.~R.~K. Chung, Spectral Graph Theory, CBMS Regional Conference Series in
Mathematics 92, Amer. Math. Soc., Providence, 1997.

\bibitem{deza-laurent1997}
M.~Deza, M.~Laurent, Geometry of Cuts and Metrics, Springer, Berlin, 1997.

\bibitem{deza-shpectorov1996}
M.~Deza, S.~Shpectorov, Recognition of the $\ell_1$-graphs with complexity
$O(nm)$, or football in a hypercube, European J.~Combin. 17 (1996) 279--289.

\bibitem{babai-sos1985} L.~Babai, V.~T. S\'os, Sidon sets in groups and
induced subgraphs of Cayley graphs, European J.~Combin. 6 (1985) 101--114.
\bibitem{diananda-yap1969}
P.~H. Diananda, H.~P. Yap, Maximal sum-free sets of elements of finite
groups, Proc. Japan Acad. 45 (1969) 1--5.

\bibitem{djokovic1973}
D.~\v{Z}. Djokovi\'c, Distance-preserving subgraphs of hypercubes,
J.~Combin. Theory Ser.~B 14 (1973) 263--267.

\bibitem{eppstein2005}
D.~Eppstein, The lattice dimension of a graph, European J.~Combin. 26 (2005)
585--592.

\bibitem{firsov1965}
V.~V. Firsov, Isometric embedding of a graph in a Boolean cube,
Cybernetics 1 (1965) 112--113.

\bibitem{godsil-royle2001}
C.~Godsil, G.~Royle, Algebraic Graph Theory, Graduate Texts in Mathematics
207, Springer, New York, 2001.

\bibitem{graham-pollak1971}
R.~L. Graham, H.~O. Pollak, On the addressing problem for loop switching,
Bell System Tech.~J. 50 (1971) 2495--2519.

\bibitem{graham-winkler1985}
R.~L. Graham, P.~M. Winkler, On isometric embeddings of graphs,
Trans. Amer. Math. Soc. 288 (1985) 527--536.

\bibitem{green-ruzsa2005}
B.~Green, I.~Z. Ruzsa, Sum-free sets in abelian groups,
Israel J.~Math. 147 (2005) 157--188.

\bibitem{hammack2011}
R.~Hammack, W.~Imrich, S.~Klav\v{z}ar, Handbook of Product Graphs, 2nd ed.,
CRC Press, Boca Raton, 2011.

\bibitem{hammond2011}
D.~K. Hammond, P.~Vandergheynst, R.~Gribonval, Wavelets on graphs via
spectral graph theory, Appl. Comput. Harmon. Anal. 30 (2011) 129--150.

\bibitem{imrich-klavzar1996}
W.~Imrich, S.~Klav\v{z}ar, On the complexity of recognizing Hamming graphs
and related classes of graphs, European J.~Combin. 17 (1996) 209--221.

\bibitem{imrich-klavzar2000}
W.~Imrich, S.~Klav\v{z}ar, Product Graphs: Structure and Recognition,
Wiley, New York, 2000.

\bibitem{laurent1994}
M.~Laurent, Hypercube embedding of distances with few values, in:
H.~Barcelo, G.~Kalai (Eds.), Jerusalem Combinatorics~'93,
Contemp. Math. 178, Amer. Math. Soc., Providence, 1994, pp.~179--207.

\bibitem{mckay-praeger1994}
B.~D. McKay, C.~E. Praeger, Vertex-transitive graphs which are not Cayley
graphs, I, J.~Austral. Math. Soc. Ser.~A 56 (1994) 53--63.

\bibitem{ortega2018}
A.~Ortega, P.~Frossard, J.~Kova\v{c}evi\'c, J.~M.~F. Moura,
P.~Vandergheynst, Graph signal processing: overview, challenges, and
applications, Proc. IEEE 106 (5) (2018) 808--828.

\bibitem{ovchinnikov2008}
S.~Ovchinnikov, Partial cubes: structures, characterizations, and
constructions, Discrete Math. 308 (2008) 5597--5621.

\bibitem{read-wilson1998}
R.~C. Read, R.~J. Wilson, An Atlas of Graphs, Clarendon Press, Oxford, 1998.

\bibitem{rhemtulla-street1970}
A.~H. Rhemtulla, A.~P. Street, Maximal sum-free sets in finite abelian
groups, Bull. Austral. Math. Soc. 2 (1970) 289--297.

\bibitem{sandryhaila-moura2013}
A.~Sandryhaila, J.~M.~F. Moura, Discrete signal processing on graphs,
IEEE Trans. Signal Process. 61 (7) (2013) 1644--1656.

\bibitem{sheridan2021}
K.~Sheridan, J.~Berleant, M.~Bathe, A.~Condon, V.~Vassilevska Williams,
Factorization and pseudofactorization of weighted graphs, arXiv:2112.06990,
2021.

\bibitem{shpectorov1993}
S.~V. Shpectorov, On scale embeddings of graphs into hypercubes,
European J.~Combin. 14 (1993) 117--130.

\bibitem{shuman2013}
D.~I. Shuman, S.~K. Narang, P.~Frossard, A.~Ortega, P.~Vandergheynst,
The emerging field of signal processing on graphs: extending
high-dimensional data analysis to networks and other irregular domains,
IEEE Signal Process. Mag. 30 (3) (2013) 83--98.

\bibitem{fokam-p3}
R.~Fokam Souop, L.~Bitjoka, Group-embedding graph Fourier transforms:
exact spectral analysis on abelian hosts, companion manuscript,
arXiv:2607.13338 [eess.SP], submitted to
IEEE Trans. Signal Inform. Process. Netw., 2026.

\bibitem{fokam-diss}
R.~Fokam Souop,  Minimal Isometric Embeddings of Graphs into
Abelian Groups: Theory, Algorithms, and Applications to Signal Processing over Networks, Ph.D. dissertation, University of Ngaoundéré, 2026, arXiv:2606.29391 [math.CO], 2026.

\bibitem{fokam-p1}
R.~Fokam Souop, L.~Bitjoka, Minimal isometric embeddings of graphs into
Cayley graphs of finite abelian groups, companion manuscript,
arXiv:2607.07920 [math.CO], 2026.

\bibitem{fokam-p4}
R.~Fokam Souop, L.~Bitjoka, Tight wavelet frames from isometric abelian
embeddings, companion manuscript, arXiv:2607.14168 [eess.SP], submitted to
IEEE Trans. Signal Process., 2026.

\bibitem{stanley2016}
R.~P. Stanley, Smith normal form in combinatorics, arXiv:1602.00166, 2016.

\bibitem{wilkeit1990}
E.~Wilkeit, Isometric embeddings in Hamming graphs,
J.~Combin. Theory Ser.~B 50 (1990) 179--197.

\bibitem{winkler1984}
P.~M. Winkler, Isometric embedding in products of complete graphs,
Discrete Appl. Math. 7 (1984) 221--225.

\bibitem{winkler1983}
P.~M. Winkler, Proof of the squashed cube conjecture, Combinatorica 3 (1983)
135--139.

\bibitem{fokam-comp}
R.~Fokam Souop, L.~Bitjoka, The Cayley completion of a graph, companion
manuscript, arXiv:2608.30894 [math.CO], 2026.

\bibitem{fokam-p2}
R.~Fokam Souop, L.~Bitjoka, Dimension and order bounds for isometric
embeddings of graphs into abelian Cayley graphs, and the abelian dividend,
companion manuscript, arXiv:2607.07939 [math.CO], 2026.

\bibitem{lopezrojas2016}
E.~A. Lopez-Rojas, A.~Elmir, S.~Axelsson, PaySim: a financial mobile money
simulator for fraud detection, in: Proc. 28th European Modeling and
Simulation Symposium, 2016, pp. 249--255.

\bibitem{altman2023}
E.~Altman, J.~Blanu\v{s}a, L.~von Niederh\"ausern, B.~Egressy, A.~Anghel,
K.~Atasu, Realistic synthetic financial transactions for anti-money
laundering models, in: Advances in Neural Information Processing Systems 36,
2023.

\bibitem{chen2015}
S.~Chen, R.~Varma, A.~Sandryhaila, J.~Kova\v{c}evi\'c, Discrete signal
processing on graphs: sampling theory, IEEE Trans. Signal Process. 63 (24)
(2015) 6510--6523.

\bibitem{anis2016}
A.~Anis, A.~Gadde, A.~Ortega, Efficient sampling set selection for bandlimited
graph signals using graph spectral proxies, IEEE Trans. Signal Process. 64
(14) (2016) 3775--3789.

\bibitem{ekambaram2013a}
V.~N. Ekambaram, G.~C. Fanti, B.~Ayazifar, K.~Ramchandran, Circulant
structures and graph signal processing, in: Proc. IEEE ICIP, 2013,
pp. 834--838.

\bibitem{ekambaram2013b}
V.~N. Ekambaram, G.~C. Fanti, B.~Ayazifar, K.~Ramchandran, Multiresolution
graph signal processing via circulant structures, in: Proc. IEEE DSP/SPE,
2013, pp. 112--117.

\bibitem{ekambaram2015}
V.~N. Ekambaram, G.~C. Fanti, B.~Ayazifar, K.~Ramchandran, Spline-like wavelet
filterbanks for multiresolution analysis of graph-structured data, IEEE Trans.
Signal Inf. Process. Netw. 1 (4) (2015) 268--278.

\bibitem{beck2024frames} K.~Beck, M.~Ghandehari, S.~Hudson, J.~Paltenstein,
Frames for signal processing on Cayley graphs, J.~Fourier Anal. Appl. 30 (2024)
paper no.~25, 30 pp.\ doi:10.1007/s00041-024-10128-5 (correction:
doi:10.1007/s00041-024-10137-4).
\bibitem{ghandehari2021gabor} M.~Ghandehari, D.~Guillot, K.~Hollingsworth,
Gabor-type frames for signal processing on graphs, J.~Fourier Anal. Appl. 27
(2021) paper no.~66.
\bibitem{loukas2019reduction} A.~Loukas, Graph reduction with spectral and cut
guarantees, J.~Mach. Learn. Res. 20 (2019) 1--42.
\bibitem{spielman2011sparsification} D.~A. Spielman, N.~Srivastava, Graph
sparsification by effective resistances, SIAM J.~Comput. 40 (2011) 1913--1926.
\end{thebibliography}
\end{document}